\documentclass[journal]{new-aiaa}
\usepackage[utf8]{inputenc}
\usepackage{textcomp}

\usepackage{amsmath,amsfonts,amssymb,amsthm}
\usepackage{bbm}

\usepackage[version=4]{mhchem}
\usepackage{siunitx}

\usepackage{longtable,tabularx}
\usepackage{booktabs}
\usepackage{makecell}
\usepackage{multirow}
\usepackage{hhline}
\usepackage{diagbox}
\usepackage{tabstackengine}
\usepackage{tablefootnote}

\usepackage{graphicx}
\usepackage{subcaption}
\usepackage{xcolor}

\usepackage{algorithm}
\usepackage{algpseudocode}

\usepackage[english]{babel}
\usepackage{bm}
\usepackage{etoolbox}
\usepackage{mathtools}
\usepackage{nth}
\usepackage{float}
\usepackage{dsfont}
\usepackage{soul}
\setstcolor{red}

\patchcmd{\thebibliography}
  {\settowidth}
  {\setlength{\parsep}{0pt}\setlength{\itemsep}{0pt plus 0.1pt}\settowidth}
  {}{}

\newtheorem{lemma}{Lemma}
\newtheorem{theorem}{Theorem}

\renewcommand{\t}{^{\mbox{\tiny\sf T}}} 

\newcommand{\R}{\mathbb{R}}
\newcommand{\N}{\mathcal{N}}

\renewcommand{\d}{\mathrm{d}}
\newcommand{\E}{\mathbb{E}}
\newcommand{\tr}{\text{tr}}

\allowdisplaybreaks

\title{\LARGE \bf Nonlinear Stochastic Density Steering via Gaussian Mixture Schr\"odinger Bridges and Multiple Linearizations}

\author{Mattia Mosso$^{1}$, George Rapakoulias$^{1}$, Yue Guan$^{2}$, Panagiotis Tsiotras$^{3}$}

\affil{%
\vspace{5mm}
$^{1}$ Graduate Student, School of Aerospace Engineering, Georgia Institute of Technology, Atlanta, GA, USA\\
$^{2}$ Postdoctoral Fellow, School of Aerospace Engineering, Georgia Institute of Technology, Atlanta, GA, USA\\
$^{3}$ David and Andrew Lewis Chair and Professor, School of Aerospace Engineering, and Institute for Robotics and Intelligent Machines, Georgia Institute of Technology, Atlanta, GA, USA}

\begin{document}

\maketitle



\section{Introduction}

Controlling the stochastic evolution of a dynamical system between prescribed probability distributions is a fundamental problem at the intersection of optimal control, probability theory, and statistical mechanics \cite{ridderhof2020chance}. A rigorous treatment of this problem forms the foundation of \emph{Stochastic Optimal Control} (SOC).
In the sequel, $x_t \!\in\! \mathbb{R}^n$ and $u_t \!\in\! \mathbb{R}^m$ denote the state and control input at time $t$, respectively. 
Let $w_t \!\in\! \mathbb{R}^p$ denote a $p$-dimensional Brownian motion process.
The SOC problem is briefly summarized as
\begin{subequations} \label{SOC}
\begin{align}
\inf_{u} \ &V(t,x) = \mathbb{E} \left[ \int_0^T L(s, x_s, u_s) \d s + h(x_T) \right], \\
&\d x_t = f_t(x_t) \, \d t + B_t u_t \, \d t + D_t \d w_t, 
\label{eq:Ito}
\end{align}
\end{subequations}
where $L$ denotes the running cost, $h$ is the terminal cost shaping the state distribution at the final time $T$, and $V(t,x)$ represents the expected cost-to-go. The state dynamics in \eqref{eq:Ito} are modeled as an Itô stochastic differential equation.
Solving \eqref{SOC} amounts to solving the Hamilton–Jacobi–Bellman (HJB) equation, which, for general nonlinear SOC problems, is analytically intractable, while numerical methods are cumbersome due to the curse of dimensionality \cite{exarchos2018stochastic}.

The time evolution of the state distribution $\rho_t$ governed by the controlled It\^{o} SDE~\eqref{eq:Ito}
is described by the Fokker-Planck-Kolmogorov (FPK) equation \cite{sarkka2019applied},
\begin{equation}
    \frac{\partial \rho_t}{\partial t}
    + \nabla \cdot \bigl[\rho_t \bigl(f_t + B_t u_t\bigr)\bigr]
    - \frac{1}{2}\operatorname{tr}\bigl(D_t D_t^{\top}
    \nabla^{2}\rho_t\bigr) \! = \! 0,
    \label{eq:FPK}
\end{equation}
where $\nabla^2 \rho_t$ denotes the Hessian of the probability density.
In our analysis, we use the FPK equation to show that a control law $u_t$ and a candidate density $\rho_t$ are consistent, that is, if $u_t$ is applied to the SDE \eqref{eq:Ito}, the resulting density will be $\rho_t$.

In many control problems, rather than implicitly controlling the final-state distribution via the terminal cost $h(x_T)$, an explicit terminal distribution constraint is imposed. 
This problem, which we refer to as a \emph{density control problem}, has been studied extensively in the context of uncertainty control in autonomous systems~\cite{chen2021controlling, caluya2021wasserstein, pilipovsky2024dust, saravanos2023distributed}, Schr\"odinger Bridge and optimal transport framework~\cite{chen2021stochastic, rapakoulias2024go, chen2016optimal, mei2024flow}, and mean-field control theory~\cite{liu2022deep, chen2018steering, rapakoulias2025steering}. 
While numerical solutions are expensive to obtain in a general setting, in the specific case where the boundary distributions are Gaussian, and the system dynamics are linear, efficient solutions can be obtained using semidefinite programming (SDP) through the \emph{Optimal Covariance Steering} (OCS) framework \cite{chen2015optimal2, bakolas2016optimal, liu2022optimal, rapakoulias2023discrete} or even in closed form \cite{chen2015optimal, ito2024maximum}.

While significant progress has been made in the field of covariance and density steering applied to complex aerospace stochastic planning problems~\cite{ridderhof2019nonlinear, benedikter_convex_2022, kumagai2025robust, kumagai2025hands}, existing methods are largely restricted to linear (or locally linearized) dynamics.
A major limitation emerges in regimes of large uncertainty.
To this end, define the first-order Taylor expansion of the dynamics around the mean $\mu(t) = \mathbb{E}[x(t)]$ as
\begin{equation}
f(x_t) = f(\mu_t) + \frac{\partial f}{\partial x}\bigg|_{x_t=\mu_t} (x_t - \mu_t) + \mathcal{O}(\| x_t - \mu_t\|^2), \label{eq:linearization_method}
\end{equation}
and let $\mathcal{R}\subseteq \R^n$ denote the region of validity of the above linear approximation. 
When the state distribution exhibits significant deviations from its assumed mean $\mu_t$, the first-order approximation in~\eqref{eq:linearization_method} becomes unreliable, as a non-negligible portion of the probability mass may lie outside $\mathcal{R}$. 
Consequently, these approaches do not handle nonlinear systems with multimodal uncertainty perfectly. 

Recent advances in diffusion-based stochastic control~\cite{berner2022optimal, domingoEnrich2024socm} demonstrate strong empirical performance, but they often lack explicit feedback synthesis and rigorous enforcement of terminal distribution constraints, which are critical in many applications. As a result, a computationally viable framework for stochastic, nonlinear, distribution-aware control with explicit terminal guarantees remains missing, in particular, in the context of trajectory optimization. 

\textit{Contributions:}
We introduce a \emph{multiple distribution-to-distribution linearization} framework that decomposes a nonlinear density steering problem into Gaussian-to-Gaussian OCS subproblems. 
We then show that each subproblem can be locally linearized and efficiently solved via semi-definite programming.
We analyze our approach both theoretically and empirically in an Earth-to-Mars orbit transfer problem, demonstrating improved performance over linearization-based baselines.

\section{Preliminaries} \label{sec:preliminaries}
This work builds upon two research areas in stochastic optimal control: covariance steering and the Schr\"odinger bridge problem (SBP).
In this section, we provide a brief overview of these two research topics and summarize the key results relevant to our proposed approach.

\subsection{Optimal Covariance Steering} \label{sec:OCS}

The optimal covariance steering problem seeks a control policy that minimizes control effort while driving a stochastic time-varying dynamical system from a prescribed initial mean and covariance to a desired terminal mean and covariance, subject to the system dynamics and control constraints.

For continuous-time \emph{linear} stochastic systems with Gaussian initial and terminal distributions, the OCS problem can be formulated as the following optimization problem:
\begin{subequations} \label{eq:OCS}
\begin{align} 
\min_{u \in \mathcal{U}} \quad  &\mathbb{E} \left[ \int_0^T \|u_t \|^2 \d t \right],  \\
 &   \d x_t = A_t x_t \d t + B_t u_t \d t + D_t \d w_t, \label{eq:OCSsystem}\\
& x_0 \sim \mathcal{N}(\mu_0, \Sigma_0), \quad x_T \sim \mathcal{N}(\mu_T, \Sigma_T).
\end{align} \label{eq:full_problem}
\end{subequations}
In the absence of chance constraints on the control and the state, problem \eqref{eq:OCS} admits an optimal solution in the form of the affine state-feedback law \cite{chen2015optimal, chen2015optimal2}
\begin{equation}
    \label{eq:affine_ctrl}
    u_t = K_t (x_t - \mu_t) + v_t, 
\end{equation}
where $K_t \in \mathbb{R}^{m \times n}$ governs the covariance dynamics, while $v_t \in \mathbb{R}^{m}$ is a feedforward term that steers the mean trajectory.
Due to linearity, the mean- and covariance-steering problems are decoupled. 
The feedforward term $v_t$ can be computed in closed form, or expressed as a quadratic program \cite{rapakoulias2023discrete}, and the feedback gain $K_t$  can be computed via the semidefinite program \cite{chen2015optimal2, rapakoulias2023discrete, balci2022exact} by solving
\begin{subequations} \label{eq:OCS_SDP}
\begin{align} 
\min_{\Sigma_t,U_t,Y_t} ~~ & \int_0^T \mathrm{tr}(R_t Y_t) \,\d t, \\ 
 &
\begin{bmatrix} \Sigma_t & U_t^\top \\
U_t & Y_t \end{bmatrix} \succeq 0, \qquad \forall t\in[0,T].
\label{eq:convex_relaxation} \\
&\dot{\Sigma}_t = A_t \Sigma_t + \Sigma_t A_t^\top + B_t U_t + U_t^\top B_t^\top + D_t D_t^\top , \label{eq:cov_dynamics} \\
&\Sigma_0 = \Sigma_t \vert_{t=0}, \quad  \Sigma_{T} = \Sigma_t \vert_{t=T}, \label{eq:prob_obj} 
\end{align}
\end{subequations}
where $K_t = U_t \Sigma_t^{-1}$, and $Y_t$ is a slack variable that equals $K_t \Sigma_t K_t\t$ at optimality.

\subsection{Nonlinear Covariance Steering via Linearization}

When the system dynamics are nonlinear, the mean and covariance steering problems become coupled, rendering the problem significantly more challenging.
Formally, the nonlinear version of the OCS problem is given by
\begin{subequations} \label{eq:N_OCS}
\begin{align} 
\min_{u \in \mathcal{U}} \quad  & \mathbb{E} \left[ \int_0^T \|u_t \|^2 \d t \right],  \\
&   \d x_t = f_t(x_t) \d t + B_t u_t \d t + D_t \d w_t, \label{eq:N_OCS1}\\
& x_0 \sim \mathcal{N}(\mu_0, \Sigma_0), \quad x_T \sim \mathcal{N}(\mu_T, \Sigma_T),
\end{align}
\end{subequations}
where $f_t$ is a state-dependent nonlinear drift function. 
We note that although we keep $ B_t$ and $ D_t$ state-independent in our analysis, a more general analysis in which they are also state-dependent can be easily carried out \cite{ridderhof2019nonlinear}.
While analytical solutions to the nonlinear OCS problem remain open, several approaches have been developed to compute approximate solutions.

For example, \cite{ridderhof2019nonlinear} proposed the iterative Covariance Steering (iCS) approach, where the nonlinear problem is iteratively solved as a sequence of approximate linear covariance steering problems, leveraging the principle of successive convexification \cite{mao2016successive}.
The iCS approach iterates between (i) propagating the nonlinear mean dynamics~\eqref{eq:N_OCS1} under the current control law and (ii) linearizing the system about the resulting mean trajectory.
The algorithm is initialized with an initial guess of the control law $\bar{u}^{(0)}_t$.
At each iteration $k$, the control law $\bar{u}^{(k)}_t$ is used to propagate the nonlinear dynamics and obtain the mean trajectory $\bar{x}^{(k)}_t$.
The nonlinear dynamics are then linearized around $(\bar{x}^{(k)}_t, \bar{u}^{(k)}_t)$.
Introducing the deviation variables $\tilde{x}_t = x_t - \bar{x}^{(k)}_t$ and $\tilde{u}_t = u_t - \bar{u}^{(k)}_t$ yields a local continuous-time linear stochastic system with additive noise of the form
\begin{subequations}
\begin{align}
\d \tilde{x}_t &\approx (A_t^{(k)} \tilde{x}_t + B_t \tilde{u}_t)\, \mathrm{d}t + D_t, \mathrm{d} w_t, 
\quad k = 0,1,\ldots, \label{eq:lin_approx_dynamics} \\
A_t^{(k)} &= \frac{\partial f_t}{\partial x} \bigg|_{x=\bar{x}^{(k)}_t}, \label{A_jacobian}
\end{align}
\end{subequations}
where $A_t^{(k)}$ denotes the Jacobian evaluated along the reference trajectory $(\bar{x}^{(k)}_t, \bar{u}^{(k)}_t)$.
The resulting linearized problem is solved via the SDP in~\eqref{eq:OCS_SDP}, yielding an updated control $\bar{u}^{(k+1)}_t$.
%
This procedure is repeated until convergence. 
Observe that the resulting problem at each iteration is formulated as a convex optimization problem~\eqref{eq:OCS_SDP}, which is solved to obtain the optimal feedforward $v_t$ and feedback gain $K_t$, subject to the linearized dynamics.

\subsection{Gaussian Mixture Schr\"odinger Bridge}
Consider now a generalization of the linear OCS in~\eqref{eq:OCS} where, instead of Gaussian marginals, the initial and final boundary distributions are GMMs with $N_0$ and $N_1$ components, respectively.
Then, the optimization problem is cast as follows
\begin{subequations} \label{eq:GMM_SB}
\begin{align}
\min_{u \in \mathcal{U}} \ \ &J_{\mathrm{GMM}} \triangleq \mathbb{E} \left[ \int_0^{T} \|u_t(x_t) \|^2 \, \d t \right], \\
& \d x_t = A_t x_t  \, \d t + B_t u_t \, \d t + D_t \, \d w_t, \label{eq:sys} \\ 
&\hspace{-0.15in}x_0 \!\sim\! \sum_{i=1}^{N_0} \alpha_0^{i} \mathcal{N}(\mu_0^{i}, \Sigma_0^{i}), \quad x_T \!\sim\! \sum_{j=1}^{N_1} \alpha_T^{j} \mathcal{N}(\mu_T^{j}, \Sigma_T^{j}). \label{eq:GMM_constraints}
\end{align} 
\end{subequations}
An approximation to the solution of \eqref{eq:GMM_SB} can be achieved following the method presented in \cite{rapakoulias2024go}. 
Let $u_{t|ij}$ be the conditional policy that solves the $(i,j)$-OCS problem, i.e., an affine control law of the form \eqref{eq:affine_ctrl}, that steers the system \eqref{eq:sys} from the $i$-th component of the initial Gaussian mixture to the $j$-th component of the terminal mixture, and let $\rho_{t|ij}$ be the resulting Gaussian density of the $(i,j)$-OCS problem. 
A feasible set of control laws for \eqref{eq:GMM_SB} is given by
\begin{equation}
u_t(x) =   \sum_{i,j} u_{t|ij}(x) \frac{\rho_{t|ij}(x) \lambda_{ij}}{\sum_{i',j'} \rho_{t|i'j'}(x) \lambda_{i'j'}}, \label{eq:mixture_policy}
\end{equation}
where $\lambda_{ij}$ is a set of positive weights, obtained through the linear program in \eqref{eq:OT} below (see also \cite[Theorem 1]{rapakoulias2024go}), while the corresponding density of the state is 
\begin{equation} \label{mixture_rho}
    \rho_t = \sum_{ij} \lambda_{ij} \rho_{t|ij}. 
\end{equation}
The calculation of the component-level transport plan $\lambda_{ij}$ is computed by solving
\begin{subequations}
\begin{align}
\min_{\lambda_{ij} \geq 0} \quad &J_{\mathrm{OT}} \triangleq \sum_{i, j} \lambda_{ij} \ C_{ij}, \\
 &\sum_j \lambda_{ij} = \alpha_0^{i},  \quad \forall i = 1,\ldots,N_0, \\ \quad &\sum_i \lambda_{ij} = \alpha_T^{j}, \quad \forall j = 1,\ldots, N_1,
\end{align} \label{eq:OT}
\end{subequations}
where $C_{ij}$ is the cost of the $(i,j)$-OCS subproblem \cite[Theorem 2]{rapakoulias2024go}.
Notice that the proposed weighting scheme prioritizes the conditional policies whose mean is closer to the value of $x_t$.
This formulation incorporates the idea that each $(i,j)$ reference policy is associated with a prior probability $\lambda_{ij}$ induced by the optimal transport coupling, while the state $x_t$ at time $t$ has a known likelihood of being generated from the Gaussian component $\rho_{t|ij}$. 
Similar concepts have also been explored in discrete-time density control problems using GMMs \cite{balci2023density, kumagai2024chance, rapakoulias2026schrodinger}.

\section{Nonlinear Density Steering}

In this section, we extend the techniques presented in Section~\ref{sec:preliminaries} to approximate the nonlinear density steering problem. 
While, in general, a final distribution constraint increases the complexity of solving an SOC problem, in the proposed formulation, we exploit this additional structure stemming from the GMM boundary conditions to construct a set of reference linearizations that synthesize a non-linear control law.

\subsection{Multiple Distribution-to-Distribution Linearization}

Let $\rho_0$ and $\rho_T$ be the initial and final boundary distributions on $\mathbb{R}^n$. 
Since Gaussian mixture models are dense in the space of probability measures, any boundary distributions can be approximated arbitrarily well by GMMs~\cite{mazya2007approximate}.
To this end, we restrict our attention to the case where $\rho_0$ and $\rho_T$ are given GMMs, and consider the problem
\begin{figure}[t]
\centering
\includegraphics[width=0.6\columnwidth]{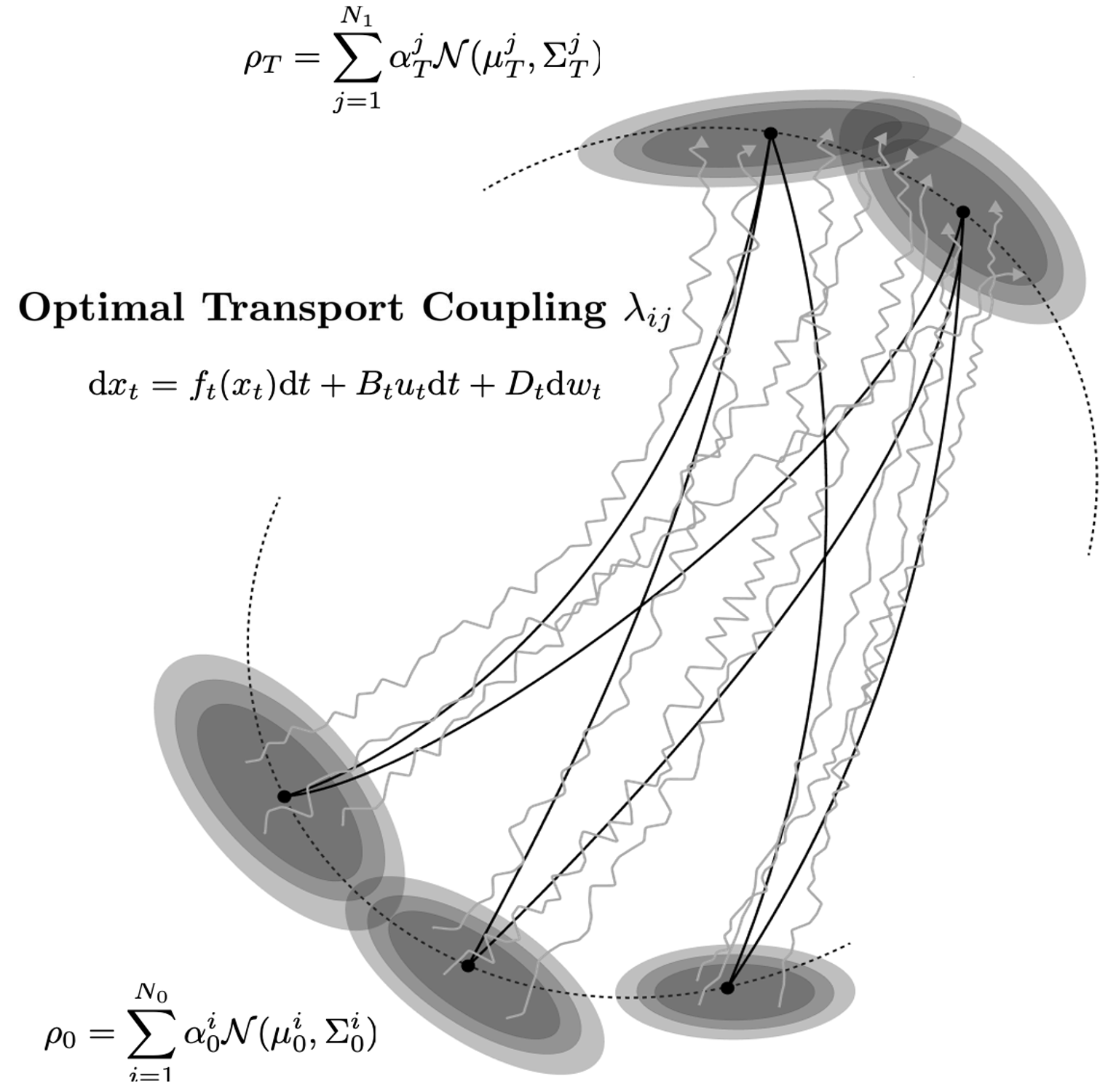}
\caption{The problem of steering the initial distribution $\rho_0$, modeled as a Gaussian Mixture, of the state of a stochastic continuous-time non-linear system to a desired terminal distribution $\rho_T$ (GMM), at a given stage $t=T$. 
Each ellipse in a different intensity of gray represents a single Gaussian distribution, where a darker color highlights the region of more probability mass.}
\label{fig:description}
\end{figure}
\begin{subequations} \label{eq:N_OCS_GMM}
    \begin{align}
    \min_{u \in \mathcal{U}} \quad & \mathbb{E}_{x_t} \left[ \int_0^{T} \| u_t (x_t)\|^2 \d t\right], \\
  & \d x_t = f_t(x_t) \d t + B_t u_t \d t + D_t \d w_t,   \label{eq:nonlinear_eq} \\
    &\hspace{-0.25in}x_0 \!\sim\! \sum_{i=1}^{N_0} \alpha_0^{i} \mathcal{N}(\mu_0^{i}, \Sigma_0^{i}), \quad x_T \!\sim\! \sum_{j=1}^{N_1} \alpha_T^{j} \mathcal{N}(\mu_T^{j}, \Sigma_T^{j}).
\end{align}
\end{subequations}
Following \cite{rapakoulias2024go, rapakoulias2025steering}, we construct a policy similar to \eqref{eq:mixture_policy} by exploiting its mixture structure to linearize \eqref{eq:nonlinear_eq} around different reference trajectories for each conditional policy.
Fig.~\ref{fig:description} illustrates the high-level idea of this approach. 

We now formally define the building blocks of the multiple linearization approach.
For each component pair $(i,j) \in \{1,\ldots,N_0\} \times \{1,\ldots,N_1\}$, the
\emph{reference mean trajectory} $\mu_{t|ij}$ is defined as the solution of the
\textit{deterministic} open-loop optimal control problem
\begin{subequations} \label{eq:mean_ref} 
\begin{align} \min_{\bar{u}_{t|ij}} \quad & J_{ij} \;=\; \int_0^T \| \bar{u}_{t|ij} \|^2 \, \d t, \\ 
&\dot{\mu}_{t|ij} = f_t(\mu_{t|ij}) + B_t \bar{u}_{t|ij}, \\ &\mu_{0|ij} = \mu^i_0, \quad \mu_{T|ij} = \mu^j_T, 
\end{align} 
\end{subequations}
where $\bar{u}_{t|ij}$ is the open-loop control obtained by solving the deterministic nonlinear program~\eqref{eq:mean_ref} with boundary conditions fixed at the $i$-th initial and $j$-th
terminal GMM component means. 

With the optimal mean trajectory $\mu_{t|ij}$ and mean control $\bar u_{t|ij}$ known, we evaluate the Jacobian matrix of the nonlinear drift term in \eqref{eq:nonlinear_eq} around the reference mean trajectories using Equations \eqref{A_jacobian}.
The above procedure yields a set of matrices $A_{t|ij}, B_{t}$ and $D_{t}$, where $t \in [0, T]$.
 
Given the linearized system~\eqref{eq:lin_approx_dynamics}, the associated $(i,j)$-OCS problem is then solved as described in Section~\ref{sec:OCS}, producing an optimal affine feedback law of the form
\begin{equation}
    u_{t|ij}(x) = K_{t|ij}\bigl(x - \mu_{t|ij}\bigr) + \bar{u}_{t|ij},
    \label{eq:control_ij}
\end{equation}
where $K_{t|ij} \in \mathbb{R}^{m \times n}$ is the optimal feedback gain controlling the covariance dynamics, and $\bar{u}_{t|ij}$ is the feedforward term steering the mean along the reference trajectory. 
The density of the  $(i,j)$-th Gaussian bridge is given by 
\begin{equation}
    \rho_{t|ij} = \mathcal{N}\!\bigl(\mu_{t|ij},\Sigma_{t|ij}\bigr),
\end{equation}
and satisfies $\rho_{0|ij} = \mathcal{N}(\mu^i_0,\Sigma^i_0)$ and
$\rho_{T|ij} = \mathcal{N}(\mu^j_T,\Sigma^j_T)$. 

Repeating this procedure for each $(i,j)$ of the $N_0 \times N_1$ subproblems allows us to define the necessary building blocks to compute the control law defined by \eqref{eq:mixture_policy} as a mixture of multiple linearizations, and hence we refer to this method as a \textit{multiple linearization} approach. 
In particular, the conditional densities serve as the weighting kernels, while the final nonlinear controller is given by
\begin{equation} 
    u_t(x)
    = \sum_{i,j}
      \bigl(K_{t|ij}\bigl(x - \mu_{t|ij}\bigr) + \bar{u}_{t|ij}\bigr)
      \frac{\rho_{t|ij}(x)\,\lambda_{ij}}
           {\displaystyle\sum_{i',j'}\rho_{t|i'j'}(x)\,\lambda_{i'j'}},
    \label{eq:mixture_control_explicit}
\end{equation}
where the weights $\lambda_{ij}$ are obtained by solving the linear
program~\eqref{eq:OT}. 
This formulation ensures that, at any point $x_t$ in the state-space, the overall control is a linear combination of the local affine policies~\eqref{eq:control_ij}, with each policy weighted by the posterior probability that $x_t$ was generated by the $(i,j)$-th Gaussian component $\rho_{t|ij}$
at time $t$.

We note that, in practice, this calculation is performed in a temporal discretization of the time horizon $[0, T]$. 
Since a sufficiently small time step yields negligible discretization error, we perform the analysis in continuous time and leave an inherently discrete-time analysis for future work.

\subsection{Error Analysis and Density Characterization} \label{sec:analysis}



We now evaluate the performance of the proposed multiple linearization approach in comparison to the single-linearization methods in \cite{ridderhof2019nonlinear, ridderhof2020chance, benedikter_convex_2022}.
Before formally assessing this claim, we first study an abstract problem from a function approximation perspective to understand how multiple linearizations can improve the accuracy of approximating a nonlinear function.

Consider a Gaussian mixture $\rho = \sum_i \alpha_i \N(\mu_i, \Sigma_i)$ with $N$ components and mean $\mu=\sum_i \alpha_i\mu_i$.
We consider two approximations of a differentiable function $f: \R^n \to \R^n$. 
\begin{align}
    f_\mathrm{SL}(x) &= f(\mu)+ \nabla f(\mu)(x-\mu), \label{eq:SL_ref} \\
    f_{\mathrm{ML}}(x) &= \sum_i  \frac{\alpha_i \N(x; \mu_i, \Sigma_i)}{\rho(x)} f_{i}(x), \label{eq:ML_ref}
\end{align}
where 
\begin{equation}
f_i(x) \triangleq f(\mu_i) + \nabla f(\mu_i)(x - \mu_i) \label{eq:FI}    
\end{equation} 
denotes the linearization of $f$ around the mean $\mu_i$ of the $i$-th mixture component.

In words, $f_\mathrm{SL}$ represents the single linearization (SL) approximation around the global mean $\mu$,
while $f_\mathrm{ML}$ is a weighted average of multiple linearizations (ML) around each component's mean $\mu_i$. 

We now state a result that characterizes the upper bounds on the expected error for these two approximations.

\begin{theorem}
    \label{theo2}
    Suppose the linearization error of $f$ is uniformly quadratically bounded. 
    Then, the upper bound on the expected error of the multiple linearization, $\E \big( \|f - f_{\mathrm{ML}}\|_2\big) \le \epsilon_{\mathrm{ML}}$, is no worse than the corresponding upper bound of the single linearization, $\E \big( \|f - f_{\mathrm{SL}}\|_2\big) \le \epsilon_{\mathrm{SL}}$, that is, $\epsilon_{\mathrm{ML}} \le \epsilon_{\mathrm{SL}}$.
\end{theorem}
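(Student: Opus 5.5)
The plan is to make the hypothesis explicit and reduce everything to a comparison of second moments. We interpret ``uniformly quadratically bounded'' as the existence of a constant $L>0$ such that, for every expansion point $y\in\R^n$ and every $x\in\R^n$,
\begin{equation*}
\bigl\|f(x) - f(y) - \nabla f(y)(x-y)\bigr\|_2 \le L\,\|x-y\|_2^2 .
\end{equation*}
With this, each approximation error is controlled by a weighted squared distance to the linearization points. The bounds $\epsilon_{\mathrm{SL}}$ and $\epsilon_{\mathrm{ML}}$ are then computed explicitly and compared.

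\emph{Single linearization.} Applying the bound with $y=\mu$ gives $\|f(x)-f_{\mathrm{SL}}(x)\|_2\le L\|x-\mu\|_2^2$. Taking expectation under $\rho$ yields
\begin{equation*}
\epsilon_{\mathrm{SL}} \triangleq L\,\E\|x-\mu\|_2^2 = L\sum_i \alpha_i\bigl(\tr(\Sigma_i) + \|\mu_i-\mu\|_2^2\bigr),
\end{equation*}
where the second equality is the law of total variance for the mixture.

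\emph{Multiple linearization.} Write $\pi_i(x)=\alpha_i\N(x;\mu_i,\Sigma_i)/\rho(x)$. The $\pi_i(x)$ are nonnegative and sum to one, so
\begin{equation*}
f(x)-f_{\mathrm{ML}}(x)=\sum_i \pi_i(x)\bigl(f(x)-f_i(x)\bigr).
\end{equation*}
The triangle inequality and the quadratic bound with $y=\mu_i$ then give
\begin{equation*}
\|f(x)-f_{\mathrm{ML}}(x)\|_2\le L\sum_i \pi_i(x)\|x-\mu_i\|_2^2 .
\end{equation*}
The key step is taking the expectation under $\rho$: the factor $\rho(x)$ cancels the denominator of $\pi_i$. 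This gives
\begin{equation*}
\E\,\|f-f_{\mathrm{ML}}\|_2 \le L\sum_i \alpha_i\int \N(x;\mu_i,\Sigma_i)\|x-\mu_i\|_2^2\,\d x = L\sum_i\alpha_i\tr(\Sigma_i)\triangleq \epsilon_{\mathrm{ML}}.
\end{equation*}

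\emph{Comparison.} Subtracting the two bounds gives
\begin{equation*}
\epsilon_{\mathrm{SL}}-\epsilon_{\mathrm{ML}} = L\sum_i\alpha_i\|\mu_i-\mu\|_2^2 \ge 0,
\end{equation*}
which is the between-component spread of the mixture. So $\epsilon_{\mathrm{ML}}\le\epsilon_{\mathrm{SL}}$, with equality exactly when all component means coincide.

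The main obstacle is not the algebra but the interpretation of the hypothesis. The constant $L$ must be uniform over all linearization points, so that one constant serves both $\mu$ and every $\mu_i$; this is what makes the comparison of bounds meaningful. If instead local constants $L_i$ and $L_\mu$ were intended, the comparison could fail, so I will state the uniform-constant assumption explicitly. The cancellation of $\rho$ inside the expectation also needs to be justified. It is valid because $\rho>0$ everywhere for nondegenerate Gaussians, and $\E\|x-\mu_i\|_2^2$ is finite, which legitimizes the interchange of sum and integral.
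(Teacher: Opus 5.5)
Your proposal is correct and follows essentially the same route as the paper. Both arguments use a uniform quadratic bound on the linearization error and apply the triangle (discrete Jensen) inequality with the posterior weights, so that $\rho$ cancels and gives $\epsilon_{\mathrm{ML}}=C\sum_i\alpha_i\tr(\Sigma_i)$; the law of total variance then shows $\epsilon_{\mathrm{SL}}$ exceeds this by $C\sum_i\alpha_i\|\mu_i-\mu\|_2^2\ge 0$, the only cosmetic difference being that you expand $\epsilon_{\mathrm{SL}}$ this way up front rather than comparing traces at the end.
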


\begin{proof}
Since, by assumption, the linearization error of $f$ is uniformly quadratically bounded, for all $x_0,x \in \R^n$, we have
    \begin{equation}
    \label{eqn:quadratic-bound}
        \|f(x) - \bar{f}_{x_0}(x)\|_2 \le C \|x-x_0\|_2^2,
    \end{equation}
where 
$\bar f_{x_0}(x) \triangleq f(x_0) + \nabla f (x_0) (x - x_0)$ is the first-order approximation around $x_0$, and $C$ is a constant. 
For component-wise $L$-smooth functions, the quadratic error bound~\eqref{eqn:quadratic-bound} follows from~\cite[Eq.~(8.3)]{nocedal2006numerical}. 

For the single linearization approximation, the total expected approximation error is bounded as
\begin{subequations}
\begin{align}
 \E \big( \|f(x) - f_{\mathrm{SL}}(x)\|_2 \big)
&= \! \int_{\R^n} \!\!\| f(x) - f_{\mathrm{SL}}(x)\|_2\rho(x) \, \d x  \\
&\leq C \, \int_{\R^n} \| x - \mu\|^2_2 \; \rho(x) \, \d x 
=  C\, \tr ( \Sigma ) \triangleq \epsilon_{\mathrm{SL}}, \label{eq:sigma}
\end{align}
\end{subequations}
where $\Sigma$ is the covariance of $\rho$.
Similarly, the multiple-linearization approximation yields the following upper bound on the expected error
\begin{subequations}
\label{eqn:ML-bound}
\begin{align}
\int_{\R^n} \| f(x) - f_{\mathrm{ML}}(x) \|_2 \; \rho(x)\, \d x &=  \int_{\R^n} \ \biggl\| \sum_{i=1}^N \frac{\alpha_i \mathcal{N}(x ; \mu_{i}, \Sigma_{i})}{\rho(x)}(f(x) - f_i(x)) \biggl\|_2 \; \rho(x)\, \d x \\
& \leq \sum_{i=1}^N \alpha_i \int_{\R^n} \|f(x) - f_{i}(x) \|_2 \; \mathcal{N}(x ; \mu_{i}, \Sigma_{i}) \,\d x \label{eq:jensen1} 
\\
&\leq \sum_{i=1}^N \alpha_i C \, \int_{\R^n} \| x - \mu_i\|_2^2 \; \mathcal{N}(x ; \mu_{i}, \Sigma_{i}) \, \d x \label{eq:bound_ML_1} \\
&= C \sum_{i=1}^N \alpha_i \tr ( \Sigma_{i} ) \triangleq \epsilon_{\mathrm{ML}}, \label{eq:bound_ML}
\end{align}
\end{subequations}
where \eqref{eq:jensen1} is from the discrete Jensen's inequality \citep[Theorem~7.35]{wheeden1977measure} and \eqref{eq:bound_ML_1} follows from~\eqref{eqn:quadratic-bound}.
Using the variance decomposition, we obtain 
\begin{equation} \label{eq:var_decomp}
\Sigma = \sum_{i} \alpha_{i} \Sigma_{i} + \sum_{i}\alpha_{i} (\mu_{i} - \mu) (\mu_{i} - \mu)\t.
\end{equation}
Taking the trace on both sides of \eqref{eq:var_decomp} leads to
\begin{equation}
\tr  ( \Sigma )  =  \sum_{i} \alpha_i \tr ( \Sigma_{i} ) + \sum_{i} \alpha_i \| \mu_{i} - \mu\|_2^2,  \label{eq:trace_exp}
\end{equation}
which implies that $\tr (\Sigma)  \geq \sum_{i} \alpha_i \tr (\Sigma_{i}$).
The result then follows immediately from the definitions of $\epsilon_{\mathrm{SL}}$ and $\epsilon_{\mathrm{ML}}$ in \eqref{eq:sigma} and \eqref{eq:bound_ML}, respectively.
%
\end{proof}

Unfortunately, we note that an improved upper bound on the linearization error between the ML and SL approximations does not, in general, imply better approximation accuracy, since the derived bounds may be loose.
A natural question, therefore, is under what conditions on $f$ can we obtain a stronger result.
It turns out that the convexity or concavity of the components of the vector field $f$ is sufficient to prove strictly better approximation accuracy between the multi and single linearization approaches.
We establish this in Theorem \ref{theo3}.
In the following, to facilitate a component-wise analysis, we will compare linearization accuracy using the expected $1$-norm error, denoted by $\| \cdot \|$.

\begin{lemma}
\label{lem:optimal_linearization}
Consider the $i$-th component of the GMM $\rho$ and assume each component $f^{(k)}: \mathbb{R}^n \to \mathbb{R}$, 
$k \in \{1,\ldots,n\}$ of the vector field $f$, is either globally convex or concave. 
Then, denoting by $f_a$ the linearization of $f$ around $a \in \mathbb{R}^n$, it holds that
\begin{equation} \label{lemma_eq}
\begin{split}
\int_{\R^n} \|f(x)&-f_i(x)\| \mathcal{N}(\mu_i, \Sigma_i)\, \d x \leq \int_{\R^n} \|f(x) - f_a(x)\| \mathcal{N}(\mu_i, \Sigma_i) \, \d x.
\end{split}
\end{equation}
\end{lemma}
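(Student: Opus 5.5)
The plan is to reduce the vector inequality \eqref{lemma_eq} to scalar inequalities, one per coordinate. Because $\|\cdot\|$ now denotes the $1$-norm, we have $\|f(x)-f_a(x)\| = \sum_{k=1}^n |f^{(k)}(x) - f^{(k)}_a(x)|$, where $f^{(k)}_a(x) = f^{(k)}(a) + \nabla f^{(k)}(a)\t (x-a)$. The integral then splits into a sum over $k$. It therefore suffices to show, for each fixed $k$ and every $a\in\R^n$,
\begin{equation*}
\int_{\R^n} |f^{(k)}(x) - f^{(k)}_{\mu_i}(x)|\,\mathcal{N}(x;\mu_i,\Sigma_i)\,\d x \le \int_{\R^n} |f^{(k)}(x) - f^{(k)}_{a}(x)|\,\mathcal{N}(x;\mu_i,\Sigma_i)\,\d x .
\end{equation*}
Summing over $k$ then gives \eqref{lemma_eq}, since $f_i = f_{\mu_i}$ by \eqref{eq:FI}. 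Note that $a$ is the same for all coordinates, which is harmless because the scalar inequality holds for every $a$.

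Fix $k$ and write $g = f^{(k)}$. First suppose $g$ is convex and differentiable. The key fact is that the first-order tangent of a convex function is a global underestimator: $g(x) \ge g_a(x)$ for all $x$ and all $a$. Hence the absolute value drops out, $|g - g_a| = g - g_a$, and since $g_a$ is affine, linearity of expectation gives
\begin{equation*}
\E_{\mathcal{N}(\mu_i,\Sigma_i)}\bigl[|g(x)-g_a(x)|\bigr] = \E[g(x)] - g_a(\mu_i) = \E[g(x)] - g(a) - \nabla g(a)\t(\mu_i - a).
\end{equation*}
The right-hand side depends on $a$ only through $g_a(\mu_i)$, the value at the point $\mu_i$ of the tangent plane taken at $a$. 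Applying the same tangent inequality at the point $\mu_i$ gives $g_a(\mu_i) \le g(\mu_i) = g_{\mu_i}(\mu_i)$. Therefore the expected error at $a$ is at least $\E[g] - g(\mu_i)$, which is exactly the expected error at $a=\mu_i$. In other words, $a=\mu_i$ maximizes the subtracted term and so minimizes the error.

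If $g$ is concave, the same argument applied to $-g$ gives the same conclusion, because $|g-g_a| = |(-g)-(-g)_a|$. Each coordinate may independently be convex or concave, so the two cases can be mixed across $k$ without difficulty.

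The main technical point is not the inequality itself but making sure all the expectations are finite. I need $\E|g(x)|<\infty$ under the Gaussian $\mathcal{N}(\mu_i,\Sigma_i)$ so that the split $\E[g - g_a] = \E[g] - g_a(\mu_i)$ is legitimate. I would handle this with the standing growth assumption from Theorem~\ref{theo2}: the uniform quadratic bound \eqref{eqn:quadratic-bound} implies $|g(x)| \le |g_a(x)| + C\|x-a\|^2$, which is integrable against a Gaussian. If one prefers not to assume this, note that both sides of the scalar inequality are integrals of nonnegative functions. The inequality $g - g_a = (g - g_{\mu_i}) + (g_{\mu_i} - g_a)$, with $\E[g_{\mu_i} - g_a] = g(\mu_i) - g_a(\mu_i) \ge 0$, then gives the result directly, with the right-hand side allowed to be $+\infty$.
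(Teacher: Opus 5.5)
Your proof is correct and follows the paper's argument essentially step for step: split the $1$-norm coordinatewise, and use the tangent-underestimator property of a convex $f^{(k)}$ to drop the absolute value. Then reduce via linearity of expectation to $f^{(k)}(\mu_i) \ge f^{(k)}_a(\mu_i)$, and handle the concave case by symmetry. Your extra decomposition $g-g_a=(g-g_{\mu_i})+(g_{\mu_i}-g_a)$, whose second term is affine and hence integrable, usefully removes the finiteness assumption that the paper's chain of equivalences uses implicitly when it cancels $\E[f^{(k)}]$ from both sides.
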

for all values of $a\in \R^n$, where $f_i$ is given by Equation \eqref{eq:FI}.
\begin{proof}

Expanding the 1-norm in the LHS of \eqref{lemma_eq}, we obtain
\begin{equation} \label{expanded_1norm}
\int_{\R^n} \|f(x) - f_{i}(x)\| \mathcal{N}(\mu_i, \Sigma_i) \, \d x  = \sum_{k=1}^n \int_{\R^n} |f^{(k)}(x) - f^{(k)}_{i}(x)| \mathcal{N}(\mu_i, \Sigma_i) \, \d x 
\end{equation}
We will study each term in the sum \eqref{expanded_1norm} individually. 
To this end, consider a single component $f^{(k)}$, and define the linearization around a point $\alpha \in \R^n$ as $f^{(k)}_a(x) = f^{(k)}(a) + \nabla f^{(k)}(a)\t(x - a)$. 
Furthermore, define the component linearization error around the distribution $\N(\mu_i, \Sigma_i)$ as 
\begin{equation} \label{component_error}
g^{(k)}(a) = \mathbb{E}_{x \sim \mathcal{N}(\mu_i, \Sigma_i)}\left[|f^{(k)}(x) - f^{(k)}_a(x)|\right].
\end{equation}

Assume, for now, that $f^{(k)}$ is convex. 
From properties of convex functions \cite{boyd2004convex}, it holds that all first-order approximations of $f^{(k)}$ are global underestimators, that is $f^{(k)}(x) \geq f^{(k)}_a(x)$ for all $x, a \in \R^n$. 
Given the last inequality, we will show that $g^{(k)}(\mu_i) \leq g^{(k)}(a) $  for any $a \in \R^n$, that is, the linearization error in \eqref{component_error} is minimized when the function is linearized around the underlying distribution's mean. 
More specifically, the following statements hold
\begin{align*}
   g^{(k)}(\mu_i) \leq g^{(k)}(a) 
 \iff & \int_{\R^n} |f^{(k)}(x) -f^{(k)}_{i} (x)| \mathcal{N}(\mu_i, \Sigma_i)\, \d x \leq \int_{\R^n} |f^{(k)}(x) - f_a(x) | \mathcal{N}(\mu_i, \Sigma_i) \, \d x \\
 \iff & \int_{\R^n} \big( f^{(k)}(x) -f^{(k)}_{i} (x) \big) \mathcal{N}(\mu_i, \Sigma_i)\, \d x \leq \int_{\R^n} \big( f^{(k)}(x) - f_a(x) \big) \mathcal{N}(\mu_i, \Sigma_i) \, \d x \\
 \iff & \int_{\R^n} \big( f^{(k)}_{i}(x) -f^{(k)}_{a} (x) \big) \mathcal{N}(\mu_i, \Sigma_i)\, \d x \geq 0 \\
\iff & \mathbb{E}_{x \sim \mathcal{N}(\mu_i, \Sigma_i)}\left[ f_i^{(k)}(x) \right] - \mathbb{E}_{x \sim \mathcal{N}(\mu_i, \Sigma_i)}\left[ f^{(k)}_a(x)\right] \geq 0 \\
 \iff & f^{(k)} (\mu_i) + \nabla f^{(k)}(\mu_i) (x - \mu_i)\big|_{x= \mu_i} - f^{(k)} (a) - \nabla f^{(k)}(a) (x - a)\big|_{x= \mu_i} \geq 0 \\
 \iff & f^{(k)}(\mu_i) - f^{(k)}(a) - \nabla f^{(k)} (a) ( \mu_i - a) \geq 0 \\
 \iff &  f^{(k)}(\mu_i) \geq f^{(k)}_a (\mu_i).
\end{align*}
%
A similar analysis holds when the component $f^{(k)}$ is concave. 
Returning to \eqref{expanded_1norm}, we have 
\begin{align}
\int_{\R^n} \|f(x) - f_{i}(x)\| \mathcal{N}(\mu_i, \Sigma_i) \, \d x 
= \sum_{k=1}^n g^{(k)}(\mu_i) 
 \leq \sum_{k=1}^n g^{(k)}(a) 
= \int_{\R^n} \|f(x) - f_a(x)\| \mathcal{N}(\mu_i, \Sigma_i)\, \d x, 
\end{align}
which yields the desired result.
\end{proof}

We now prove that under the assumptions of Lemma \eqref{lem:optimal_linearization}, ML is a better approximation of $f$ compared to SL: 

\begin{theorem} \label{theo3}
Let $f_{\mathrm{SL}}, f_{\mathrm{ML}}$ be defined as in \eqref{eq:SL_ref}, \eqref{eq:ML_ref}, respectively, and $f: \R^n \to \mathbb{R}^n$ satisfy the conditions of Lemma \ref{lem:optimal_linearization}. Then, the multiple linearization approximation $f_{\mathrm{ML}}$ has better linearization error compared to $f_{\mathrm{SL}}$, that is,  
\begin{equation}
     \E \big( \|f - f_{\mathrm{ML}}\| \big) \leq \E \big( \|f - f_{\mathrm{SL}}\| \big). 
\end{equation}
\end{theorem}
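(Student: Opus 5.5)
The plan is to combine the triangle/Jensen step already used in the proof of Theorem~\ref{theo2} with Lemma~\ref{lem:optimal_linearization}, applied with $a=\mu$. The idea is to work component by component and always integrate against $\rho=\sum_i \alpha_i \N(\mu_i,\Sigma_i)$, splitting it into its mixture components.

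First, I bound the multiple-linearization error from above. Writing $f - f_{\mathrm{ML}} = \sum_i w_i(x)\,(f - f_i)$ with posterior weights $w_i(x) = \alpha_i \N(x;\mu_i,\Sigma_i)/\rho(x)$, which are nonnegative and sum to one, the triangle inequality for the $1$-norm gives $\|f(x)-f_{\mathrm{ML}}(x)\| \le \sum_i w_i(x)\|f(x)-f_i(x)\|$. Multiplying by $\rho(x)$ and integrating cancels the denominator, exactly as in \eqref{eq:jensen1}. This yields
\begin{equation*}
\E\big(\|f-f_{\mathrm{ML}}\|\big) \le \sum_{i=1}^N \alpha_i \int_{\R^n} \|f(x)-f_i(x)\|\,\N(x;\mu_i,\Sigma_i)\,\d x .
\end{equation*}

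Second, I apply Lemma~\ref{lem:optimal_linearization} to each summand with $a=\mu$, the global mean, so that $f_a = f_{\mathrm{SL}}$. Since each component of $f$ is globally convex or concave, each term satisfies
\begin{equation*}
\int \|f-f_i\|\,\N(\mu_i,\Sigma_i)\,\d x \le \int \|f-f_{\mathrm{SL}}\|\,\N(\mu_i,\Sigma_i)\,\d x .
\end{equation*}

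Third, I sum these bounds with weights $\alpha_i$. The right-hand side becomes $\int \|f-f_{\mathrm{SL}}\|\sum_i\alpha_i\N(x;\mu_i,\Sigma_i)\,\d x = \int \|f-f_{\mathrm{SL}}\|\,\rho\,\d x = \E\big(\|f-f_{\mathrm{SL}}\|\big)$, which is the claim. This last step is an equality, with no slack, because $f_{\mathrm{SL}}$ does not depend on the component index $i$.

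The main point needing care is the second step. In its convex case the lemma replaces $|f^{(k)}-f^{(k)}_a|$ by $f^{(k)}-f^{(k)}_a$ because tangent planes underestimate the function, and the concave case is symmetric. That sign-removal holds for every expansion point $a$, so invoking the lemma at the single common point $a=\mu$ for every component $i$ is legitimate. I will also check that all expectations are finite, for instance by assuming $f$ has at most polynomial growth, so that the exchange of sum and integral is justified.
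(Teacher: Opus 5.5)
Your proposal is correct and follows the paper's proof essentially step for step. The paper also bounds $\E\big(\|f-f_{\mathrm{ML}}\|\big)$ by $\sum_i \alpha_i \int \|f-f_i\|\,\N(\mu_i,\Sigma_i)\,\d x$ via the discrete Jensen (triangle) inequality, applies Lemma~\ref{lem:optimal_linearization} with $a=\mu$ so that $f_a=f_{\mathrm{SL}}$, and recombines the components into $\rho$; your integrability remark is a harmless extra the paper omits.
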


\begin{proof}
We can decompose the expected error as
\begin{subequations}
\begin{align}
\E \big( \|f - f_{\mathrm{ML}}\| \big) &=  \int_{\R^n} \| f(x) - f_{\mathrm{ML}}(x) \| \rho(x) \, \d x, \\ \label{eq:jensen_step}
&\leq \sum_{i=1}^N \alpha_{i} \int_{\R^n} \|f(x) - f_{i}(x) \| \mathcal{N}(\mu_{i}, \Sigma_{i}) \, \d x,  \\ \label{lemma_step} 
&\leq  \sum_{i=1}^N \alpha_{i} \int_{\R^n} \|f(x) - f_{\mathrm{SL}}(x) \| \mathcal{N}(\mu_{i}, \Sigma_{i}) \d x  \\
&= \E \big( \|f - f_{\mathrm{SL}}\| \big)
\end{align}
\end{subequations}
where \eqref{eq:jensen_step} follows from the discrete Jensen's inequality \citep[Theorem~7.35]{wheeden1977measure}, and \eqref{lemma_step} follows from Lemma \ref{lem:optimal_linearization}. 
This concludes the proof of Theorem \ref{theo3}.

\end{proof}

We now return to Problem \eqref{eq:N_OCS_GMM}.
We verify that, under the multiple-linearization policy,
the mixture density $\rho_t$ satisfies an FPK equation \ref{eq:FPK} whose drift approximates the
true nonlinear drift $f_t$.
Specifically, consider the FPK equation corresponding to the
$(i,j)$-linearization, with drift
$\bar{f}_{t|ij}(x)=f_t(\mu_{t|ij})+A_{t|ij}(x-\mu_{t|ij})$, given by
\begin{equation}
    \frac{\partial \rho_{t|ij}}{\partial t}
    + \nabla\cdot\!\Bigl(\rho_{t|ij}
      \bigl(\bar{f}_{t|ij} + B_t u_{t|ij}\bigr)\Bigr)
    - \frac{1}{2}\operatorname{tr} \bigl(D_t D_t\t
      \nabla^{2}\rho_{t|ij}\bigr) = 0,
\end{equation}
where $u_{t|ij}$ solves the $(i,j)$-OCS subproblem.
Multiplying by $\lambda_{ij}$ and summing over all $(i,j)$
pairs yields
\begin{equation} \label{ml_fpk}
    \frac{\partial \rho_{t}}{\partial t} + \nabla \cdot \left( \left( f_{\mathrm{ML}} + B_t u_t \right) \rho_t \right) - \frac{1}{2} \tr (D_t D_t\t \nabla^2 \rho_{t}) = 0, 
\end{equation}
where, 
\begin{equation}
    f_{\mathrm{ML}} = \sum_{ij} \bar f_{t|ij} \frac{\lambda_{ij} \rho_{t|ij}}{\rho_t}, 
\end{equation}
with
$\rho_t = \sum_{ij} \lambda_{ij} \rho_{t|ij}$, and $u_t$ is defined by \eqref{eq:mixture_control_explicit}.
Equation \eqref{ml_fpk} shows that if the system dynamics were evolving according to $f_{\mathrm{ML}}$, the controller \eqref{eq:mixture_control_explicit} would indeed transport the GMM $\rho_0$ to $\rho_T$ \textit{exactly}, serving as a foundation for our motivation. 
Furthermore, since according to Theorems~\ref{theo2}, and~\ref{theo3}, 
$f_\mathrm{ML}$ is a better approximation of the nonlinear function $f$ than a single linearization, we expect the true $\rho_T$ to be closer to the terminal GMM than with a linear controller. 

Finally, we remark that the component-wise convexity requirement in Theorem~\ref {theo3} to guarantee improved approximation accuracy might be too strong for many practical problems, including our case study on spacecraft navigation in the following section.
Empirically, however, we observed that the ML method continues to outperform SL even when the convexity conditions are not satisfied, as demonstrated in our numerical simulations.


\section{Numerical Simulation}
We evaluate the efficacy of the proposed multiple linearization method in the Earth-Mars rendezvous scenario presented in \cite{benedikter_convex_2022}, under nonlinear orbital dynamics and non-Gaussian state uncertainty. 

\subsection{Modeling of the dynamics}
The dynamics are expressed in a Sun-centered inertial frame restricted to the ecliptic plane (2D). 
Assume the spacecraft has approximately constant mass and leaves Earth with zero hyperbolic excess velocity. 
The physical state is defined by position and velocity expressed in Cartesian components, namely $r_t = [p^x_t, p^y_t]\t, v_t = [v^x_t, v^y_t]\t$. 
Assume the spacecraft is modeled as a point mass and the applied control is the acceleration provided by the thrusters $u_t = [u^x_t,u^y_t]\t$. 
In state-space form, the dynamics can be written as:
\begin{align}
\d x_t = 
\begin{bmatrix} v_t \\ - \dfrac{\mu_{\odot}}{\|r_t\|^3} r_t \end{bmatrix} \d t + \begin{bmatrix} 0_2 \\ I_2 \end{bmatrix} u_t \, \d t + \begin{bmatrix} 0_2 \\ g_v I_2 \end{bmatrix} \d w_t ,\label{dynamics_eq}
\end{align}
where $x_t = [p^x_t, p^y_t, v^x_t, v^y_t ]\t$. 
Note that the final time $T$ is fixed and known, and the Sun's gravity and the noise are the only external forces acting on the system. 
All variables and parameters are scaled to reduce disparities in their orders of magnitude, which is essential for the convergence of the algorithm. Scaling is performed relative to the astronomical unit (AU) and days.
The related linearized representation is:
\begin{align*}
&A_t = \frac{\partial f}{\partial x} = 
\begin{bmatrix}
    0_{2} & I_{2} \\
    A_{vr} & 0_{2} \\
\end{bmatrix}, 
\qquad 
B_t =  \begin{bmatrix}
    0_{2} \\ I_{2}
    \end{bmatrix},
\qquad
D_t =  \begin{bmatrix}
    0_{2} \\  g_v I_{2}
    \end{bmatrix}, 
    \end{align*}
where,  
\begin{align*}
A_{vr} = 
\mu_{\odot} \begin{bmatrix}
   \dfrac{3 (p^{x})^2 - \|r\|^2}{\|r\|^5} & \dfrac{3 p^x p^y}{\|r\|^5} \\[10pt]
    \dfrac{3 p^x p^y}{\|r\|^5} & \dfrac{3 ({p^y})^2 - \|r\|^2}{\|r\|^5}     
\end{bmatrix}.
\end{align*}
In practice, we replace the process noise covariance  $D_t D_t \t$, with 
\begin{align}
\hat{\Sigma}_d = \begin{bmatrix}
    0_{2} & 0_{2} \\
    0_{2} & g_v^2 \ I_{2} \\
\end{bmatrix} + \varepsilon I_{4},
\end{align}
where $\varepsilon>0$ models an artificial noise induced by the linearization.

The nonlinear program (NLP) used to compute the open-loop reference is solved using Sequential Least Squares Quadratic Programming (SLSQP). To provide an accurate initial guess, we adopt a Hohmann transfer estimate, motivated by the well-known sensitivity of this class of algorithms to initialization. Finally, the $N_0 \times N_1$ OCS subproblems are solved using MOSEK \cite{aps2020mosek}.

\begin{table}[!ht]
\centering
\caption{Initial values and simulation parameters.$^{\star}$}
\label{tab:simulation_parameters}
\setlength{\tabcolsep}{3pt}
\renewcommand{\arraystretch}{1.15}
\small
\begin{tabular}{l c p{3.8cm}}
\toprule
\textbf{Parameter} & \textbf{Symbol} & \textbf{ Value} \\

\midrule
\multicolumn{3}{c}{\textit{Simulation Parameters}} \\
\midrule
Init. Components   & $N_0$               & $3$ \\
Target Components  & $N_1$               & $2$ \\
Time Nodes         & $N$                 & $101$ \\
Diffusion Scale    & $\varepsilon^2$     & {$10^{-7}\,\mathrm{diag}([0.05, 0.05,0.1, 0.1])$} \\
Noise Intensity & $g_v$ & $10^{-4}\ \mathrm{m/s^{3/2}}$\\
Sun Grav. Param.   & $\mu_{\odot}$       & $2.9591{\times}10^{-4}\;\mathrm{AU^3/day^2}$ \\
\midrule
\multicolumn{3}{c}{\textit{Boundary Conditions -- Case I}} \\
\midrule
Init. Covariances   & $\Sigma_0^i$      & {$\mathrm{diag}(4.5{\times}10^{-9},\ 4.5{\times}10^{-9}, 3.5{\times}10^{-9}, 3.5{\times}10^{-9})$} \\[4pt]
Target Covariances  & $\Sigma_{T}^j$      & {$\mathrm{diag}(4.5{\times}10^{-6}, \ 4.5{\times}10^{-6},3.5{\times}10^{-8}, 3.5{\times}10^{-8})$} \\
Weights & $\alpha_0^i, \alpha_T^j$ & $1/N_0$, $1/N_1$\\
\midrule
\multicolumn{3}{c}{\textit{Boundary Conditions -- Case II}} \\
\midrule
Init. Mean         & $\mu_0$         & \makecell[l]{$(-0.9405,\,-0.3450)\ \mathrm{AU}$\\$(9.7746,\,-28.078)\ \mathrm{km/s}$} \\[4pt]
Target Mean        & $\mu_T$         & \makecell[l]{$(-1.1543,\;\phantom{-}1.1829)\ \mathrm{AU}$\\$(-16.427,\,-14.861)\ \mathrm{km/s}$} \\[4pt]
Init. Covariance   & $\Sigma_{0}$      & {$\mathrm{diag}(10^{-3},\ 10^{-3},10^{-7}, 10^{-7})$} \\
Target Covariance  & $\Sigma_{T}$      & 
$\mathrm{diag} (10^{-3},\ 10^{-3},\ 10^{-7},\ 10^{-7})$ \\
\bottomrule
\multicolumn{3}{l}{\scriptsize $^{\star}$Reference scales and simulation parameters are common to both cases.}\\
\end{tabular}
\end{table}

To properly evaluate the efficacy of the proposed method, Monte Carlo simulations were conducted considering two primary guidance policies: SL and ML. In the following two subsections, we present two different variations of the problem described here.

\subsection{Case I: Multi-Modal Boundary Distributions}

The first case of interest we investigated is the behavior of both algorithms under multimodal distributions. 
This corresponds to a situation in which multiple starting points and multiple targets can be reached with prescribed probability and accuracy. 
Formally, this corresponds to a case with $N_0$ Gaussian components, each representing a possible initial state, with an associated covariance describing the uncertainty around that state. 
\begin{figure}[htbp]
\centering
\includegraphics[width=0.7\columnwidth]{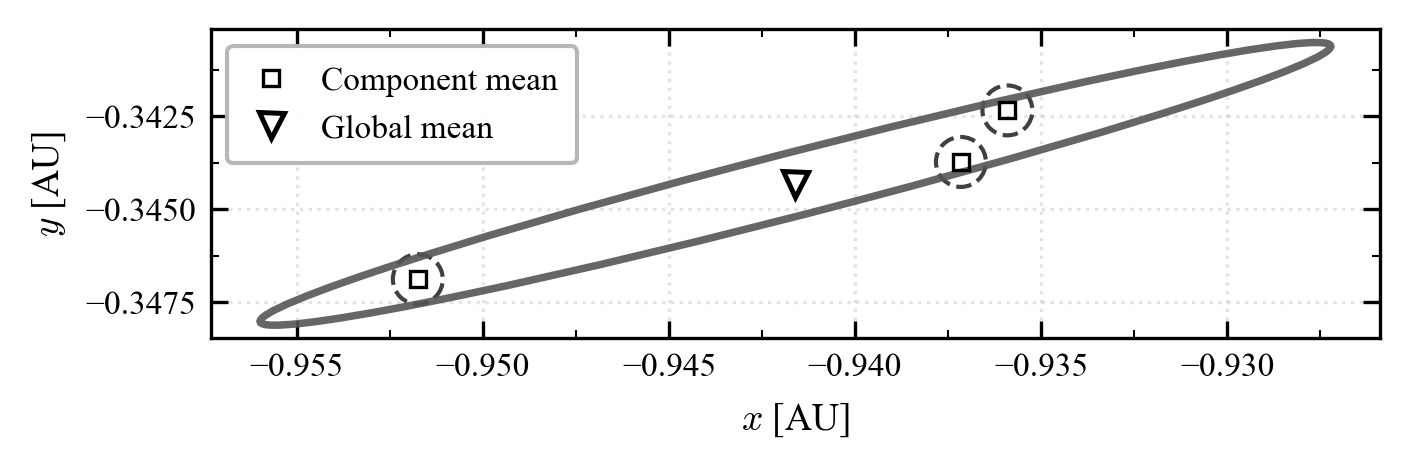}
\caption{Initial multi-modal distribution $\rho_0$. Covariance ellipses associated with the Gaussian components ($i = 1,2,3$) are computed at the two-standard-deviation level and scaled by a factor of $5$ for clarity.}
\label{fig5}
\end{figure}
A similar interpretation can be given for the multiple-arrival condition.
To this end, the initial and final distributions $\rho_0$ and $\rho_T$ are chosen to be composed of three (Figure \ref{fig5}) and two (Figure \ref{fig3}) isolated Gaussian components, respectively, such that their corresponding covariance ellipses are far from intersecting.

\begin{figure}[htbp]
\centering
\includegraphics[width=0.7\columnwidth]{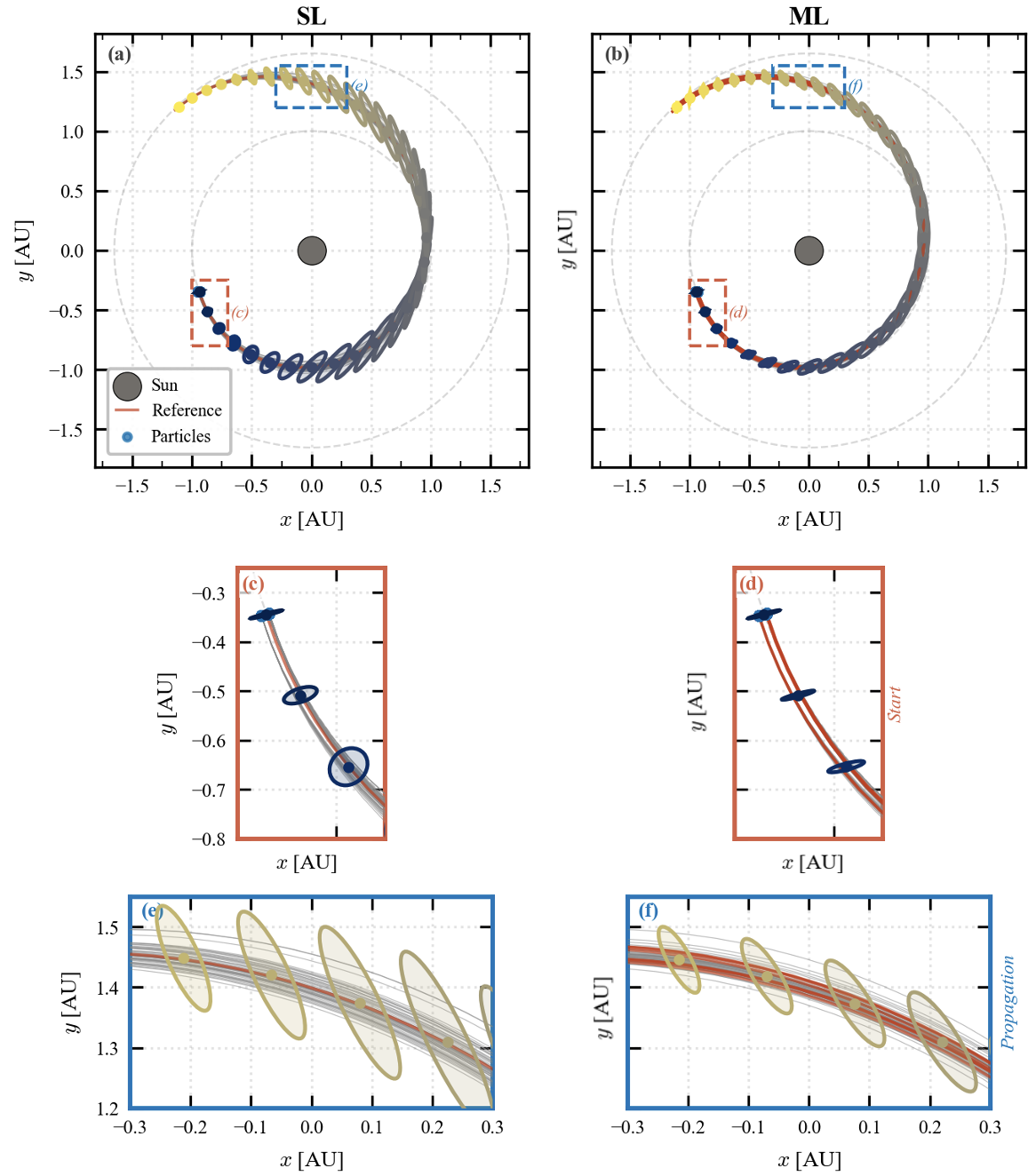}
\caption{Case I: position evolution under SL $(a)$ and ML $(b)$ policies. The top row depicts phase-space trajectories from $t = 0$ (dark blue) to $t = T$ (yellow), showing the evolution of $200$ Monte Carlo samples (gray). Panels $(c)$--$(d)$ illustrate the propagation of the initial samples drawn from $\rho_0$, shown in light blue. The red curves denote the reference trajectories (bridges). $(e)$--$(f)$ provide zoomed-in views highlighting an intermediate time window, illustrating tighter covariance control under ML. The ellipses represent $3\sigma$ confidence regions computed from the empirical distribution.}
\label{fig1}
\end{figure}

The ML approach exploits the superposition of local affine policies to effectively capture the underlying local nonlinearities and the distribution's sparse concentration. 
In contrast, the SL approach solves a single mean-to-mean problem, from the initial global mean to the final global mean, and is therefore unable to capture the separation of the probability mass, whereas ML successfully addresses this feature.
The obtained solutions are presented in Figures~\ref{fig1} 
and~\ref{fig3}.

\begin{figure}[htbp]
\centering
\includegraphics[width=0.50\columnwidth]{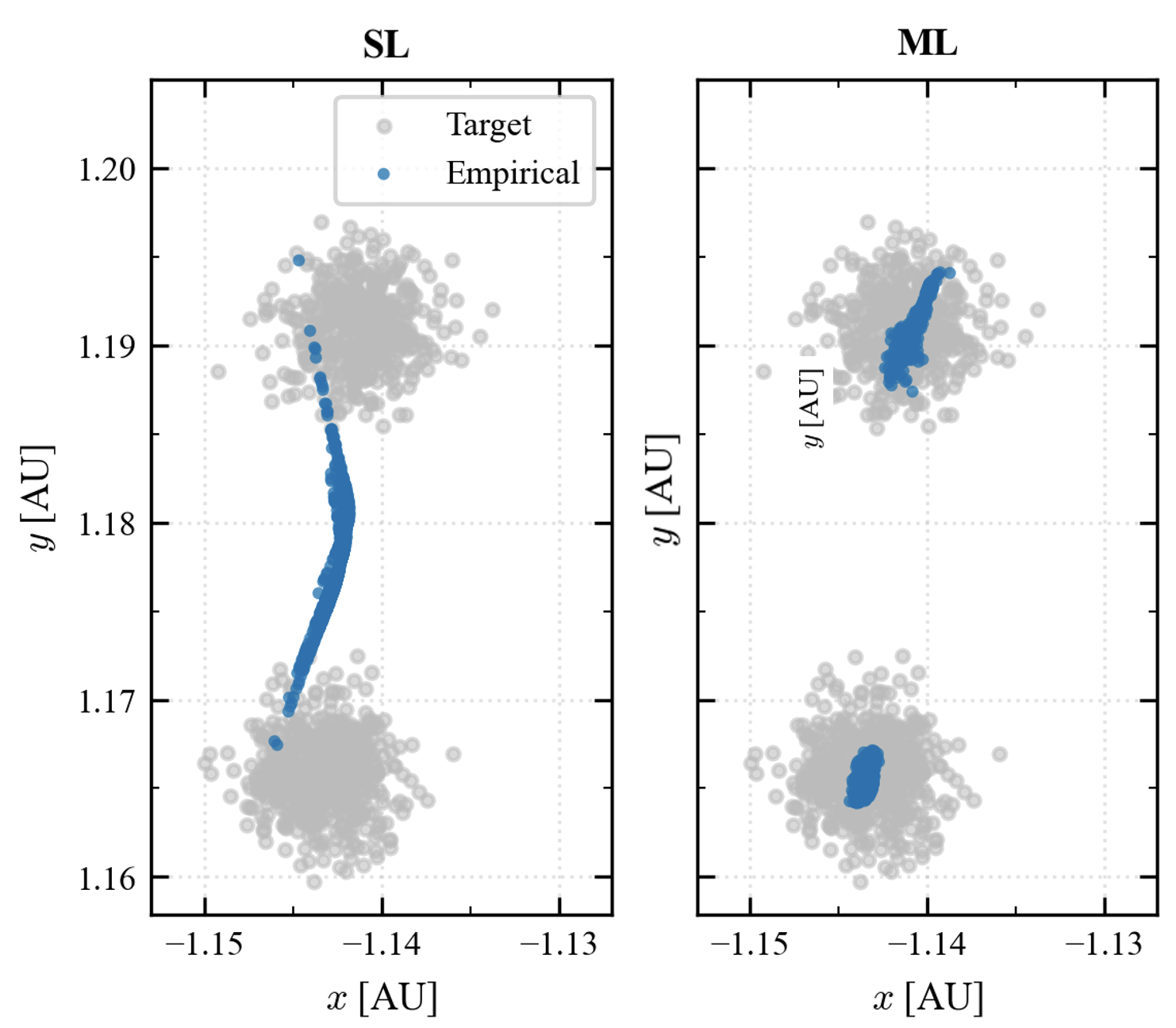}
\caption{Case I: terminal distribution matching.} 
\label{fig3}
\end{figure}

\subsection{Case II: Exploitation of Multiple Linearizations}

Consider a different scenario where the marginal constraints correspond to distributions with large covariances. 
Specifically, we consider the distributions in Figures \ref{fig7} and \ref{fig6}, with nominal values given in Table \ref{tab:simulation_parameters}. 
By artificially fitting a GMM to $\rho_0$ and $\rho_T$, we can construct a more accurate approximation of the nonlinear controller by leveraging the concept of multiple linearizations.  
We evaluate this claim by comparing the resulting control effort of the proposed approach, computed via Monte Carlo simulations, with the control cost of a single linearization.
While both methods converge to the desired target, the ML-based approach achieves the solution at a significantly lower cost $J_\text{ctrl}$.
Define the Monte Carlo estimate of the control cost as
\begin{equation}
J_\text{ctrl} = \frac{1}{N_p} \sum_{p=1}^{N_p} \sum_{k=1}^N \|u_k(x_k^{(p)} ) \|^2 \Delta t_k,    
\end{equation} 
where $N_p$ is the number of particles.

\begin{figure}[htbp]
\centering
\includegraphics[width=0.4\columnwidth]{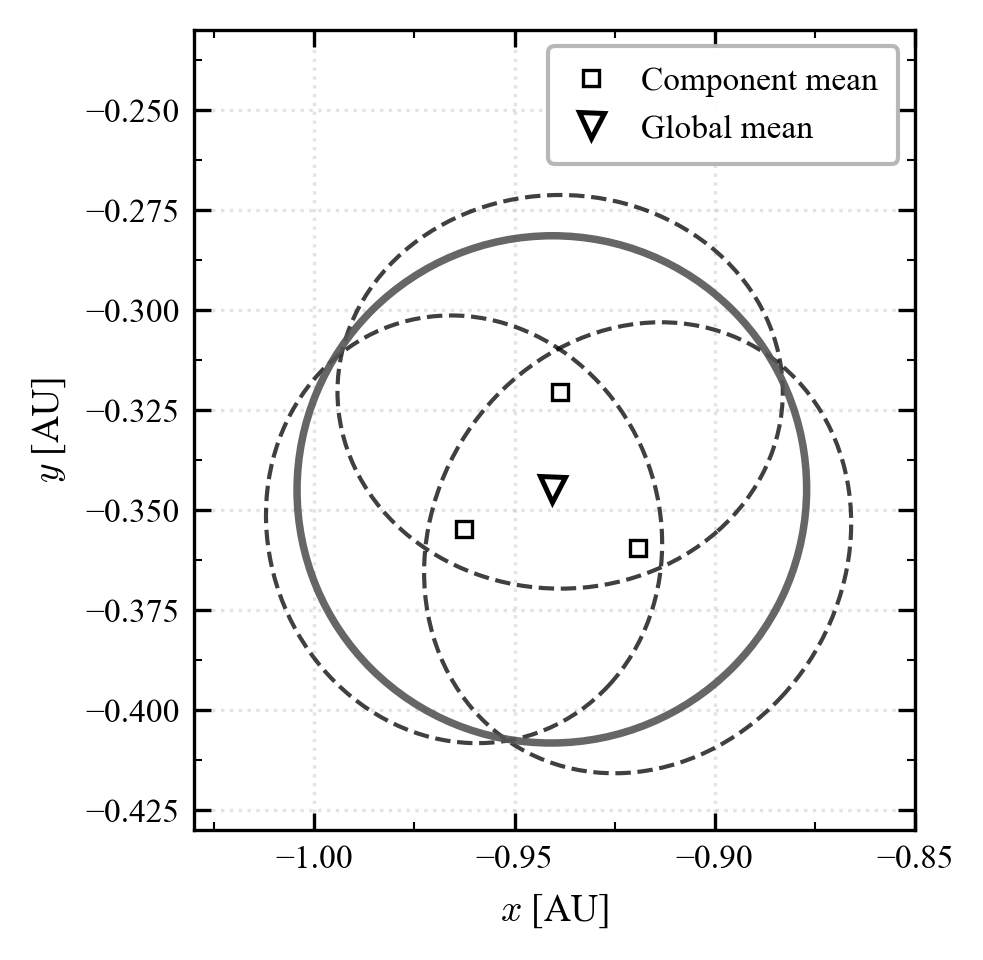}
\caption{Initial close-to-Gaussian distribution $\rho_0$. Covariance ellipses are computed at the $2\sigma$ level.} 
\label{fig7}
\end{figure}

\begin{figure}[htbp]
\centering
\includegraphics[width=0.7\columnwidth]{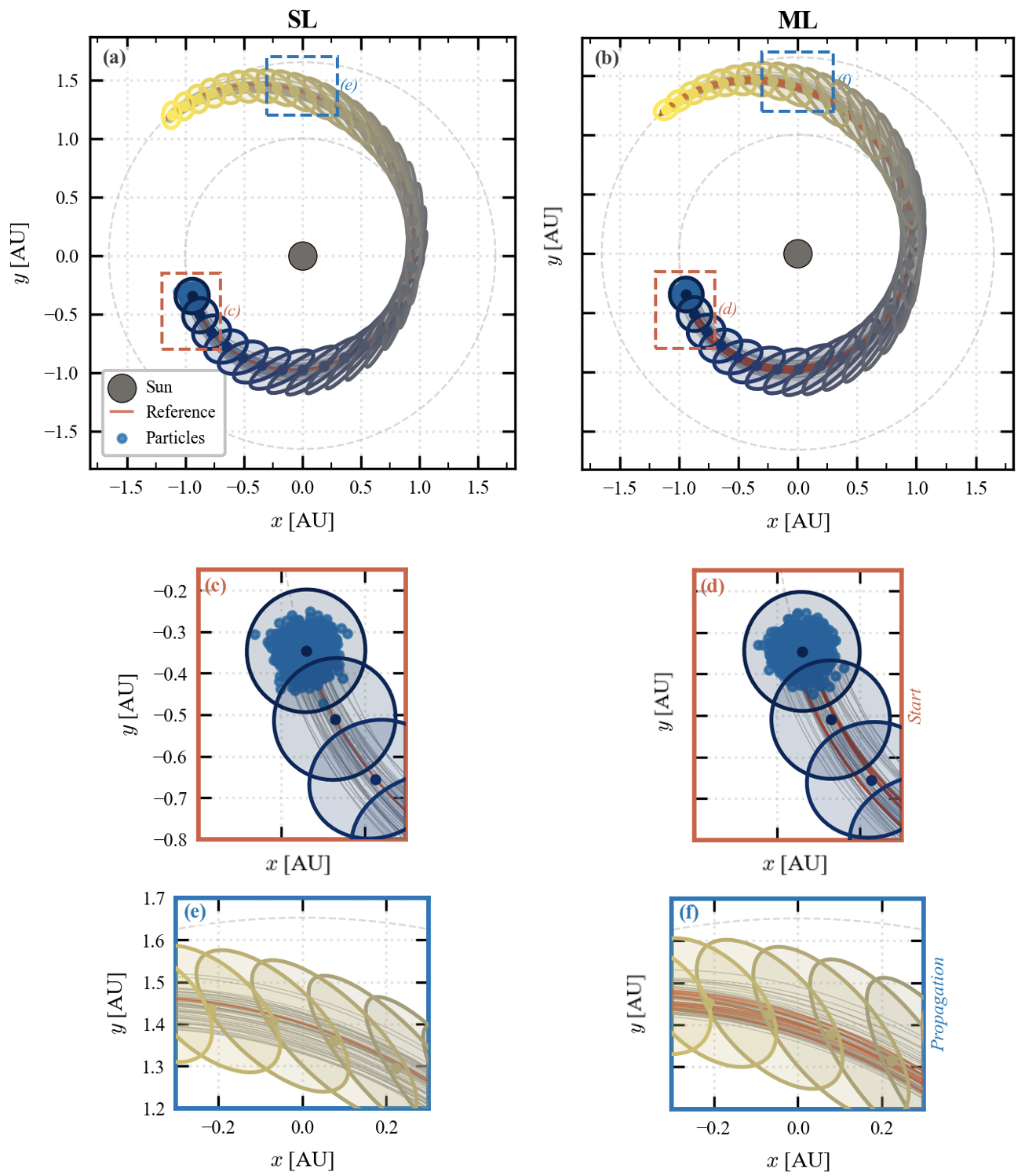}
\caption{Case II: position evolution under SL $(a)$ and ML $(b)$ policies. Layout as in Figure \ref{fig1}.}
\label{fig9}
\end{figure}
The ML approach achieves tighter distribution control throughout 
the transfer.

\begin{figure}[htbp]
\centering
\includegraphics[width= 0.7 \columnwidth]{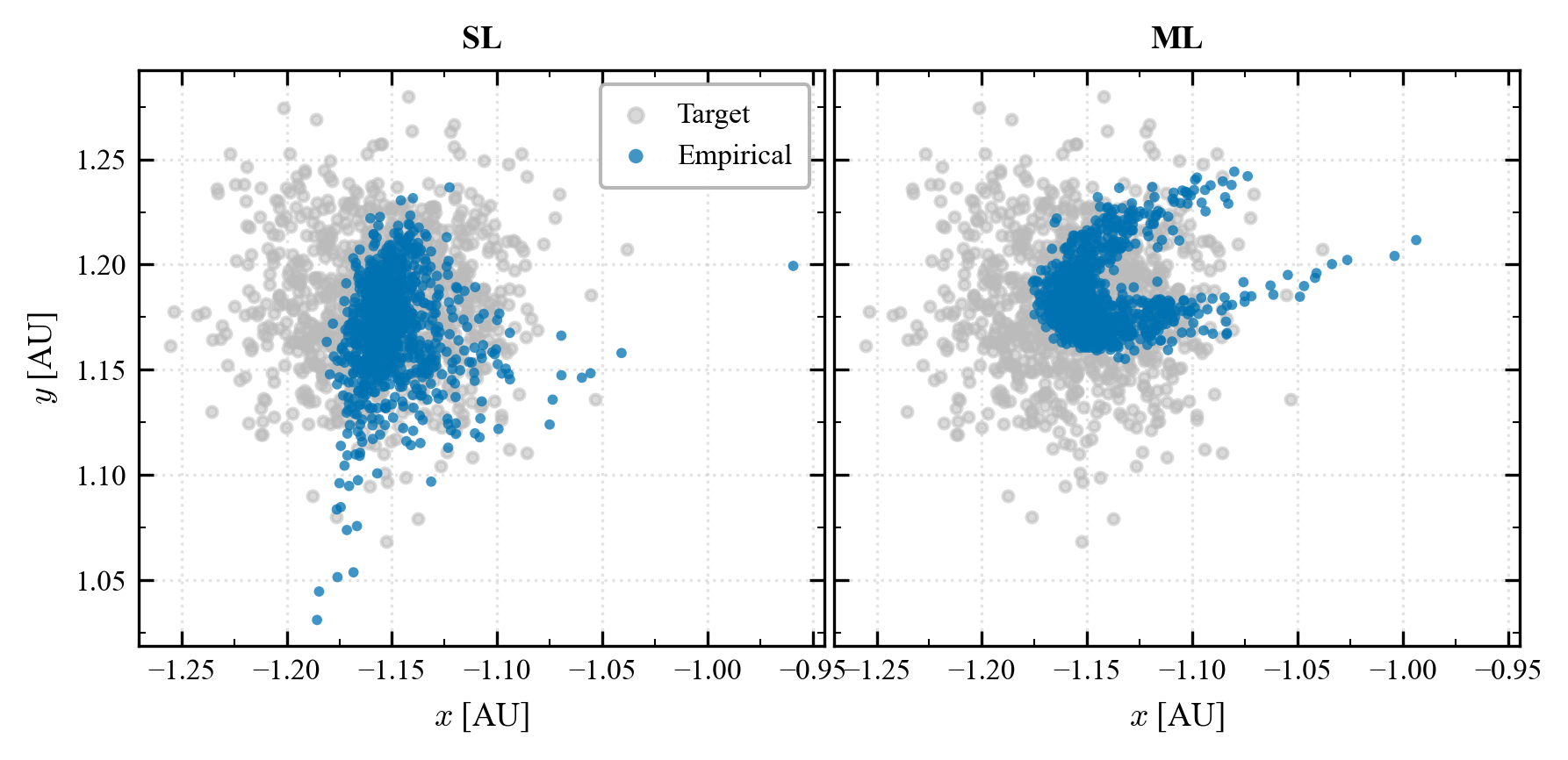}
\caption{Case II: terminal distribution matching.}
\label{fig6}
\end{figure}

\subsection{Metrics of Comparison}

The Earth-to-Mars interplanetary transfer introduces coupled 
non-linear gravity dynamics according to \eqref{dynamics_eq}, 
resulting in a state distribution that heavily strays from the 
initial Gaussian assumptions over the integration horizon. 
It is evident that the Single Linearization (SL) policy, which 
relies on an affine approximation centered around the aggregate 
global mean of the GMM, struggles to accurately capture the 
structural dispersion of the probability mass. To quantify 
terminal distribution matching, we report the Sliced 
Wasserstein-2 (Sliced W$_2$) distance, a tractable approximation of the Wasserstein-2 distance obtained by averaging one-dimensional 
optimal transport costs over random projections \cite{bonneel2015sliced}; 
a lower W$_2$ value indicates that the empirical terminal 
distribution of the Monte Carlo samples is closer to the desired 
target $\rho_T$. 

The generated feedback control commands from the SL approach
 manifest a $10.62\%$ (Case I) and $18.94\%$ (Case II) 
higher cost with respect to the ML case, while also yielding a 
larger Sliced W$_2$ distance, confirming that SL produces a less 
accurate terminal distribution match. The quantitative metrics 
obtained are summarized in Table~\ref{tab:metrics}.

\begin{table}[t]
\centering
\caption{Monte Carlo estimate of control cost comparison for Single vs Multiple Linearization. \textit{Note:} $\boldsymbol{J}_{\mathrm{ctrl}}$ is reported in units of $10^{-7}$, while Sliced $W_2$ is reported in units of $10^{-3}$. (I) and (II) denote Case I and Case II, respectively.}
\begin{tabular}{lccccc} 
\toprule
\textbf{Mode} & $\boldsymbol{J}_{\mathrm{ctrl}}$ (I) & $\boldsymbol{J}_{\mathrm{ctrl}}$ (II) & Sliced $W_2$ (I) & Sliced $W_2$ (II)\\
\midrule
SL & $1.82695$ & $1.95150$ & $4.44686$ & 13.5541\\
ML & $1.65157$ & $1.64075$ & $1.25304$ & 12.4214\\ 
\bottomrule
\end{tabular}
\label{tab:metrics}
\end{table}

\section{Conclusions}
This paper introduced a multiple-linearization approach to nonlinear stochastic density steering problems. 
By decomposing the problem into a mixture of local OCS subproblems, each linearizing the original problem around a different mean trajectory, the proposed method accurately handles multi-modal uncertainty in highly nonlinear regimes.
Theoretical analysis demonstrated that multiple linearizations provide 
tighter error bounds than single linearization when steering multi-modal 
distributions. 
Numerical results on the Earth-to-Mars transfer problem validated this finding, showing cost reduction compared to single linearization 
while maintaining terminal distribution constraints.
\vspace{-2mm}
\section*{Acknowledgment}
The authors express their thanks to Luigi Tesio for his assistance in designing Figure \ref{fig:description}.
George Rapakoulias acknowledges support from the A. Onassis Foundation.
\bibliography{refs}

\end{document}